\NewDocumentCommand{\Id}{o}{\mathbf{I}\IfNoValueTF{#1}{}{_{#1}}}
\theoremstyle{plain}
\newtheorem{theorem}{Theorem}
\newtheorem{corollary}{Corollary}
\theoremstyle{remark}
\newtheorem{remark}{Remark}
\newcommand{\one}{\mathbf 1}
\title{Steady-State Spread Bounds for Graph Diffusion via \\Laplacian Regularisation in Networked Systems}
\author{Ardavan Rahimian\\School of Engineering, Ulster University\\Belfast, BT15 1AP, UK\\\texttt{a.rahimian@ulster.ac.uk}}
\date{}
\begin{document}
\maketitle

\begin{abstract}
We study how far a diffusion process on a graph can deviate from a designed starting pattern when the pattern is generated via Laplacian regularisation. Under standard stability conditions for undirected, entrywise nonnegative graphs, we give a closed-form, instance-specific upper bound on the steady-state spread, measured as the relative change between the final and initial profiles. The bound separates two effects: (i) an irreducible term determined by the graph's maximum node degree, and (ii) a design-controlled term that shrinks as the regularisation strength increases (with an inverse square-root law). This leads to a design rule: given any target limit on spread, one can choose a sufficient regularisation strength in closed form. Although one motivating application is array beamforming---where the initial pattern is the squared magnitude of the beamformer weights---the result applies to any scenario that first enforces Laplacian smoothness and then evolves by linear diffusion on a graph. Overall, the guarantee is non-asymptotic, easy to compute, and certifies the maximum steady-state deviation.
\end{abstract}

\section{Introduction}
In networked sensing and wireless systems, power or information seeded at a few elements rarely remains localised: coupling through the physical or logical graph causes \emph{spreading} across neighbours. Examples include interference leakage in arrays and cooperative sensor networks, energy diffusion in reconfigurable surfaces~\cite{Wu2021IRS}, and activation smoothing in graph neural networks~\cite{Gama2020Graphs,Dong2020GSPMachineLearning}. In many of these settings, we can aggressively \emph{shape the initial profile} (e.g., by regularised design~\cite{Huang2023Robust}), but have limited control over the subsequent network dynamics driven by a fixed adjacency. This paper studies a fundamental question: \emph{how much spreading can occur at steady state if the initial profile is engineered via graph-Laplacian regularisation?}

We model network dynamics with the linear recursion
\(
\mathbf p_{t+1}=\rho\,\mathbf G\,\mathbf p_t+(1-\rho)\,\mathbf p_0
\),
where $\mathbf G$ is a symmetric, entrywise nonnegative adjacency and $\rho\in(0,1)$ is a propagation factor; the initial profile $\mathbf p_0=|\mathbf w^\star|^2$ is produced by a Laplacian-regularised design that balances a positive semidefinite data-fit term $\mathbf R_{\mathrm{in}}$ (e.g., interference-plus-noise in array processing) against a graph-smoothness penalty $(\mu\,\mathbf w^H\mathbf L\mathbf w)$. We prove a closed-form, instance-wise upper bound on \emph{steady-state spreading}
\(
\xi=\|\mathbf p_\infty-\mathbf p_0\|_2/\|\mathbf p_0\|_2
\)
that cleanly separates: (i) an \emph{irreducible} degree-driven term $d_{\max}$; and (ii) a \emph{design-controlled} term that decays as $O(\mu^{-1/2})$. This yields a closed-form recipe for selecting $\mu$ from a target budget $\xi_{\rm target}$.

Unlike isotropic shrinkage (ridge, $\mathbf L=\Id$) or using diagonal loading in $\mathbf R_{\rm in}$ alone, Laplacian penalties exploit the graph's geometry~\cite{Shuman2013Emerging,Sandryhaila2013Discrete,Ortega2018GSP}: neighbouring weights are encouraged to be similar, reducing high-frequency graph energy.

Classical array processing provides beamformers (e.g., MVDR) and regularised variants; robust designs focus on mismatch and sample scarcity~\cite{Huang2023Robust}. Yet, to our knowledge, no \emph{closed-form, instance-wise} bound ties a graph-regularised beamformer to the \emph{eventual} diffusion of its squared magnitude under a network recursion. Bounds in wireless often target SINR, outage, or interference temperature under channel models~\cite{NaderiAlizadeh2023Learning,Wang2024ENGNN}; diffusion on graphs is analysed via mixing times or spectral gaps~\cite{Chung1997Spectral,Hammond2011Wavelets}. Our result bridges these viewpoints: it quantifies steady-state diffusion \emph{of an engineered initial pattern} with an explicit $1/\sqrt{\mu}$ law, exposing controllable levers ($\mu$) and structural ones ($\rho$, $d_{\max}$, $\|\mathbf G\|_2$, $\lambda_{\max}(\mathbf L)$).

The $O(\mu^{-1/2})$ decay makes the design trade-off explicit: more smoothing means provably less spreading (and we quantify ``how much less''). The corollary maps a target budget $\xi_{\rm target}$ to a closed-form $\mu$. We show how this behaves with \(\rho\), including the caveat that the prefactor $C(\rho,\mathbf G)=\frac{(1-\rho)\rho}{1-\rho\|\mathbf G\|_2}$ grows near the stability boundary $\rho\|\mathbf G\|_2\uparrow 1$.

\textbf{Contributions.}
\emph{(1)} A closed-form upper bound on steady-state spreading with explicit $O(\mu^{-1/2})$ decay and a split between degree coupling and penalisation-controlled terms;
\emph{(2)} A computable corollary converting a spreading budget into a choice of~$\mu$;
\emph{(3)} A proof method combining steady-state expansion, a Laplacian energy argument tailored to $\mathbf p_0=|\mathbf w^\star|^2$, and a KKT reference-vector bound.

\paragraph{Scope and caveats.}
We consider undirected, entrywise nonnegative graphs, linear time-invariant diffusion, and place the Laplacian penalty outside $\mathbf R_{\mathrm{in}}$ (crucial for the $1/\sqrt{\mu}$ law). Prefactors are instance-specific and the bound is conservative; see Sec.~\ref{sec:limitations} for a detailed discussion.

\paragraph{Interpretation in applications.}
Although our analysis is abstract, the variables admit concrete interpretations in several domains. In an array or intelligent surface, graph nodes represent elements, $\mathbf w$ collects complex beamformer weights, and $\mathbf p_0 = |\mathbf w^\star|^2$ describes the per-element power profile that later diffuses across coupled elements according to~\eqref{eq:model-diffusion}. In wireless or networked sensing settings, nodes represent devices or base stations, $\mathbf p_t$ captures per-node power, load, or activation, and the graph $\mathbf G$ encodes communication or interference links. More broadly, in graph signal processing or GNNs, $\mathbf w$ can be viewed as design parameters or an activation pattern that is shaped by Laplacian regularisation, after which the same graph governs subsequent diffusion or message-passing dynamics. Throughout, our framework is intended as a general robust-design tool that can be instantiated in these concrete settings, rather than a literal physical model tied to a single propagation law.

\section{Related Work}\label{sec:related}

\paragraph{Beamforming and regularisation.}
Classical constrained designs (e.g., MVDR/Capon) minimise interference-plus-noise under a distortionless constraint. Robust variants manage steering uncertainty and sample scarcity via diagonal loading or shrinkage of covariance estimates~\cite{Huang2023Robust}. More recent approaches add explicit regularisers—sparsity, group penalties, and smoothness—to encode structure in the array or network. Our formulation uses a \emph{graph Laplacian} penalty, which—unlike isotropic ridge—suppresses high-frequency graph components and exploits topology~\cite{Shuman2013Emerging,Leus2023GSPHistory}.

\paragraph{Graph signal processing (GSP) for wireless systems.}
GSP has been used to regularise estimation, denoise sensor measurements, and constrain control actions on networks~\cite{Ortega2018GSP,Ruiz2021Graphon}. Laplacian smoothness priors are common to promote spatial coherence across sensors or antennas~\cite{Feng2022DataReconstruction}. Graph-based architectures have been applied to wireless resource management~\cite{Wang2024ENGNN,Yang2017InterferenceGraph} and machine learning (ML) on network data~\cite{Dong2020GSPMachineLearning,Gama2019ConvolutionalNN}. In contrast to typical GSP regressions, we tie the \emph{engineered} graph-smooth profile to a subsequent diffusion process and provide an instance-wise steady-state bound.

\paragraph{Diffusion on graphs and bounds.}
Bounds in network diffusion target mixing times or spectral decay and assume arbitrary initial conditions~\cite{Chung1997Spectral,Hammond2011Wavelets}. Heat diffusion on graphs has been studied for learning graph topology~\cite{Thanou2017HeatDiffusion}, while distributed consensus under imperfect communication~\cite{Kar2009Distributed} and privacy-preserving optimization~\cite{Yang2025DifferentiallyPrivate} examine related dynamics. Graph neural networks for wireless systems~\cite{NaderiAlizadeh2023Learning} and general GNN architectures~\cite{Gama2020Graphs} leverage graph structure but do not provide worst-case spreading bounds for Laplacian-regularised initial conditions. Our focus differs: we analyse the steady-state of a \emph{linear} diffusion driven by an initial condition shaped by a \emph{specific Laplacian-regularised optimisation}, and we obtain a closed form that separates a degree term ($d_{\max}$) from a design term scaling as $1/\sqrt{\mu}$.

\paragraph{Design rules and performance guarantees.}
Closed-form tuning rules are prized in wireless for their operational simplicity (e.g., loading rules, power budgets). Our corollary delivers such a rule for the regularisation strength~$\mu$, with an explicit feasibility floor \(C(\rho,\mathbf G)\,d_{\max}\). The validation confirms that the recipe is conservative yet effective in practice.

\section{Notation and Preliminaries}
\begin{table}[h]
\centering
\begin{tabular}{ll}
\hline
\textbf{Symbol} & \textbf{Description} \\
\hline
$\mathbf{G} \in \mathbb{R}^{N \times N}$ & Symmetric, entrywise nonnegative adjacency (undirected weighted graph) \\
$\mathbf{L} = \mathbf{D} - \mathbf{G}$ & Combinatorial graph Laplacian (PSD) \\
$\mathbf{D} = \operatorname{diag}(\mathbf{G}\one)$ & Degree matrix \\
$d_{\max}$ & Maximum degree $=\|\mathbf D\|_2=\max_i D_{ii}$ \\
$\tilde{\mathbf{R}}_s, \tilde{\mathbf{R}}_i$ & Signal and interference covariance matrices (PSD) \\
$\mathbf{R}_{\text{in}}$ & Interference-plus-noise (no Laplacian term) \\
$\mathbf{w} \in \mathbb{C}^N$ & Design vector (complex weights) \\
$\mathbf{p}_t \in \mathbb{R}_+^N$ & Power (or intensity) distribution at time $t$ \\
$\rho \in (0,1)$ & Propagation factor \\
$\mu > 0$ & Laplacian regularisation parameter \\
$\xi$ & Relative steady-state spreading \\
$\sigma^2$ & Noise variance \\
$\alpha$ & Diagonal loading parameter \\
$\Id[N]$ & $N \times N$ identity matrix (we write $\Id$ when the size is clear) \\
$\one$ & Vector of ones (length $N$) \\
$(\cdot)^H$ & Hermitian transpose \\
$\mathrm{tr}(\cdot)$ & Matrix trace \\
$\|\cdot\|_2$ & Vector 2-norm / matrix spectral norm (as appropriate) \\
$\lambda_{\max}(\cdot)$ & Maximum eigenvalue (spectral radius for PSD) \\
\hline
\end{tabular}
\end{table}

\paragraph{Standing assumptions.}
We assume $\mathbf G=\mathbf G^\top\!\ge 0$ (entrywise) so that $\mathbf L=\mathbf D-\mathbf G\succeq 0$, and $\rho\|\mathbf G\|_2<1$. We also assume $\one^\top\tilde{\mathbf R}_s\one>0$ (mild and standard), used once in Step~4 of the proof. Throughout, $\|\cdot\|_2$ on matrices denotes the spectral (operator) norm. We treat $\one$ as real, so $\one^\top=\one^{H}$.

\section{System Model}\label{sec:system-model}
\paragraph{Graph and diffusion.}
Let $\mathbf G\in\mathbb R^{N\times N}$ be a symmetric, entrywise nonnegative adjacency describing an undirected weighted graph on $N$ nodes. Define the degree matrix $\mathbf D=\operatorname{diag}(\mathbf G\one)$ and the combinatorial Laplacian $\mathbf L=\mathbf D-\mathbf G\succeq 0$. We assume the stability condition $\rho\|\mathbf G\|_2<1$ for a fixed $\rho\in(0,1)$. The elementwise power (or intensity) evolves according to
\begin{equation}
\mathbf p_{t+1}=\rho\,\mathbf G\,\mathbf p_t+(1-\rho)\,\mathbf p_0,
\qquad
\mathbf p_0\in\mathbb R_+^N,
\label{eq:model-diffusion}
\end{equation}
with steady state $\mathbf p_\infty=(\Id-\rho\mathbf G)^{-1}(1-\rho)\mathbf p_0$ whenever $\rho\|\mathbf G\|_2<1$ (assumed throughout).

\paragraph{Design and initial profile.}
In an array-processing application, the design stage produces a weight vector $\mathbf w^\star(\mu)$ by solving a Laplacian-regularised optimisation problem on the graph
(see Section 5.1). The resulting initial profile is $\mathbf p_0 = |\mathbf w^\star|^2$.

\paragraph{Performance metric and constants.}
We quantify spreading by
\(
\xi=\|\mathbf p_\infty-\mathbf p_0\|_2/\|\mathbf p_0\|_2.
\)
The bound in Theorem~\ref{thm:spreadBound} depends on
\[
C(\rho,\mathbf G)=\frac{(1-\rho)\rho}{1-\rho\|\mathbf G\|_2},\quad
d_{\max}=\|\mathbf D\|_2,\quad
\lambda_{\max}(\mathbf L),\quad
\lambda_{\max}(\tilde{\mathbf R}_s),\quad
\Lambda_{\mathrm{ref}}=\frac{\one^\top\mathbf R_{\mathrm{in}}\one}{\one^\top\tilde{\mathbf R}_s\one}.
\]
The practical design rule (Corollary~\ref{cor:design}) asserts that a target $\xi_{\mathrm{target}}$ is enforceable by choosing
\(
\mu \ge \frac{4\,N\,C(\rho,\mathbf G)^2\,\lambda_{\max}(\mathbf L)\,\Lambda_{\mathrm{ref}}\,\lambda_{\max}(\tilde{\mathbf R}_s)}
{(\xi_{\mathrm{target}}-C(\rho,\mathbf G)d_{\max})^2}
\)
provided $\xi_{\mathrm{target}}>C(\rho,\mathbf G)d_{\max}$.

\paragraph{Design and diffusion layers on the same graph.}
The graph $\mathbf G$ appears at two distinct layers. At \emph{design time}, the Laplacian penalty $\mu\,\mathbf w^H \mathbf L \mathbf w$ (with $\mathbf L = \mathbf D - \mathbf G$) encourages graph-smooth designs by penalising variations of $\mathbf w$ across edges. At \emph{run time}, after optimisation, the initial profile $\mathbf p_0 = |\mathbf w^\star|^2$ is injected into the same graph and evolves according to the diffusion model~\eqref{eq:model-diffusion}. Using a single graph for both layers is natural: the same neighbourhood structure that enforces similarity during design also governs the physical or logical coupling along which the profile may later spread.

\paragraph{Physical meaning of the variables.}
In an array-processing interpretation, $\mathbf w$ collects complex beamformer weights, $\mathbf p_0 = |\mathbf w^\star|^2$ describes power per antenna or element, $\mathbf R_{\mathrm{in}}$ models interference-plus-noise, and the Laplacian $\mathbf L$ encodes adjacency of elements in space. In a networked-sensing interpretation, $\mathbf p_t$ captures per-node power, traffic, or sensing load, and $\mathbf G$ represents communication or interference links, with $\mu$ controlling how strongly the design trades off performance against graph-smoothness. Our theoretical guarantees only rely on these structural roles (graph, Laplacian, covariance, and regularisation), which is why the resulting bounds transfer across such domains.

\section{Mathematical Framework}
\subsection{Optimisation Formulation}
In one application (array processing), $\tilde{\mathbf R}_s$ and $\tilde{\mathbf R}_i$ denote signal and interference covariances. We model interference-plus-noise \emph{without} the Laplacian term:
\begin{equation}
\mathbf{R}_{\text{in}} \;=\; \tilde{\mathbf{R}}_i \;+\; \sigma^2\Id[N] + \alpha\frac{\mathrm{tr}(\tilde{\mathbf R}_i)}{N}\Id[N].
\end{equation}
We design the weights via a penalised criterion that promotes graph-smoothness:
\begin{equation}
\min_{\mathbf{w}\neq \mathbf{0}} \ \frac{\mathbf{w}^H\mathbf{R}_{\text{in}}\mathbf{w}}{\mathbf{w}^H\tilde{\mathbf{R}}_s\mathbf{w}} \;+\; \mu\,\mathbf{w}^H\mathbf{L}\mathbf{w}.
\label{eq:penalized}
\end{equation}
Equivalently, using the constraint $\mathbf{w}^H\tilde{\mathbf{R}}_s\mathbf{w}=1$,
\begin{equation}
\min_{\mathbf{w}\neq \mathbf{0}} \ \mathbf{w}^H\mathbf{R}_{\text{in}}\mathbf{w} \;+\; \mu\,\mathbf{w}^H\mathbf{L}\mathbf{w}
\quad \text{s.t.} \quad \mathbf{w}^H\tilde{\mathbf{R}}_s\mathbf{w}=1.
\label{eq:constrained}
\end{equation}
Stationarity yields the generalised eigenproblem
\begin{equation}
(\mathbf R_{\mathrm{in}}+\mu\mathbf L)\,\mathbf w \;=\; \lambda\,\tilde{\mathbf R}_s\,\mathbf w,
\end{equation}
so $\mathbf w^\star$ is the generalised eigenvector corresponding to the smallest eigenvalue of the pair $(\mathbf R_{\mathrm{in}}+\mu\mathbf L,\tilde{\mathbf R}_s)$.

\subsection{Interference Control Analysis}
\label{sec:interferenceAnalysis}
We now analyse the diffusion model introduced in Section~\ref{sec:system-model}, see~\eqref{eq:model-diffusion}, where $\mathbf{p}_t \in \mathbb{R}_{+}^{N}$ collects the instantaneous power (or intensity) at each node and $\mathbf{p}_0 = |\mathbf{w}^\star|^{2}$ denotes the initial pattern produced by the design.

\begin{theorem}[Interference-spreading bound]
\label{thm:spreadBound}
Let $\mathbf{G} \in \mathbb{R}^{N\times N}$ be symmetric and entrywise nonnegative with $\rho\|\mathbf{G}\|_2<1$, and define $\mathbf{L}=\mathbf{D}-\mathbf{G}$ with $\mathbf{D}=\operatorname{diag}(\mathbf{G}\one)$. Let $\mathbf{w}^{\star}(\mu)$ solve \eqref{eq:penalized} (equivalently \eqref{eq:constrained}) and set $\mathbf{p}_0 = |\mathbf{w}^{\star}(\mu)|^{2}$. If $p_\infty$ denotes the steady state of \eqref{eq:model-diffusion} and 
$\xi = \|p_\infty - p_0\|_2/\|p_0\|_2$, then
\begin{equation}
\boxed{\;
\xi \;\le\; C(\rho,\mathbf G)\,\Big(\, d_{\max} \;+\;
\sqrt{ \tfrac{4\,N\,\lambda_{\max}(\mathbf L)\,\Lambda_{\mathrm{ref}}\,\lambda_{\max}(\tilde{\mathbf R}_s)}{\mu} }\,\Big),
\qquad
C(\rho,\mathbf G):=\frac{(1-\rho)\rho}{1-\rho\|\mathbf G\|_2}\; }
\label{eq:xiBound-correct}
\end{equation}
where $\displaystyle \Lambda_{\mathrm{ref}}:=\frac{\one^\top\mathbf R_{\mathrm{in}}\one}{\one^\top\tilde{\mathbf R}_s\one}$ and $d_{\max}=\|\mathbf D\|_2=\max_i D_{ii}$. Hence $\xi=O(\mu^{-1/2})$ as $\mu\to\infty$.
\end{theorem}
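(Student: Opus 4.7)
The strategy decomposes into three ingredients: an algebraic steady-state expansion, a structural bound translating the Laplacian energy of the squared-magnitude profile $\mathbf p_0=|\mathbf w^\star|^2$ back to the Laplacian energy of $\mathbf w^\star$ itself, and a reference-vector estimate exploiting the optimality of $\mathbf w^\star$.

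\emph{Steady-state expansion.} I would start by solving the fixed point of \eqref{eq:graphDiffusion} to obtain $\mathbf p_\infty=(1-\rho)(\Id-\rho\mathbf G)^{-1}\mathbf p_0$ and then rearrange into $\mathbf p_\infty-\mathbf p_0=-\rho(\Id-\rho\mathbf G)^{-1}(\Id-\mathbf G)\mathbf p_0$. Using $\Id-\mathbf G=(\Id-\mathbf D)+\mathbf L$ cleanly separates a degree contribution (with $\|\Id-\mathbf D\|_2\le d_{\max}$) from a Laplacian contribution $\mathbf L\mathbf p_0$. Invoking the Neumann bound $\|(\Id-\rho\mathbf G)^{-1}\|_2\le(1-\rho\|\mathbf G\|_2)^{-1}$ packages the resolvent into the prefactor $C(\rho,\mathbf G)$, reducing the claim to a bound on $\|\mathbf L\mathbf p_0\|_2/\|\mathbf p_0\|_2$.

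\emph{Laplacian energy argument.} I would expand the Dirichlet form $\mathbf p_0^\top\mathbf L\mathbf p_0=\tfrac12\sum_{i,j}G_{ij}(|w_i^\star|^2-|w_j^\star|^2)^2$ and apply the elementary inequality $(a^2-b^2)^2\le 4\max(a^2,b^2)(a-b)^2$ with $a=|w_i^\star|,\,b=|w_j^\star|$, combined with the reverse triangle inequality $||w_i^\star|-|w_j^\star||\le|w_i^\star-w_j^\star|$, to obtain $\mathbf p_0^\top\mathbf L\mathbf p_0\le 4\|\mathbf w^\star\|_2^2\,\mathbf w^{\star H}\mathbf L\mathbf w^\star$. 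Paired with the Courant bound $\|\mathbf L\mathbf p_0\|_2^2\le\lambda_{\max}(\mathbf L)\,\mathbf p_0^\top\mathbf L\mathbf p_0$, this yields $\|\mathbf L\mathbf p_0\|_2\le 2\sqrt{\lambda_{\max}(\mathbf L)}\,\|\mathbf w^\star\|_2\sqrt{\mathbf w^{\star H}\mathbf L\mathbf w^\star}$.

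\emph{KKT reference-vector bound and normalization.} To control the two factors on the right, I would test \eqref{eq:constrained} against the feasible reference $\mathbf w_{\mathrm{ref}}=\one/\sqrt{\one^\top\tilde{\mathbf R}_s\one}$ (well-defined by the standing assumption $\one^\top\tilde{\mathbf R}_s\one>0$); since $\mathbf L\one=\mathbf 0$ its Laplacian term vanishes and its objective value equals $\Lambda_{\mathrm{ref}}$, giving $\mathbf w^{\star H}\mathbf L\mathbf w^\star\le\Lambda_{\mathrm{ref}}/\mu$ after discarding the nonnegative $\mathbf R_{\mathrm{in}}$ term. For the ratio $\|\mathbf w^\star\|_2/\|\mathbf p_0\|_2$, the nonnegativity identity $\|\mathbf p_0\|_1=\|\mathbf w^\star\|_2^2$ together with $\|\mathbf p_0\|_1\le\sqrt{N}\|\mathbf p_0\|_2$ and the constraint-derived lower bound $\|\mathbf w^\star\|_2^2\ge 1/\lambda_{\max}(\tilde{\mathbf R}_s)$ delivers $\|\mathbf w^\star\|_2/\|\mathbf p_0\|_2\le\sqrt{N\lambda_{\max}(\tilde{\mathbf R}_s)}$. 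Chaining the three steps produces \eqref{eq:xiBound-correct}, with the advertised $O(\mu^{-1/2})$ scaling. The main obstacle is the Laplacian-energy step: the quadratic amplification $(|w_i|^2-|w_j|^2)^2$ must be tamed without bringing in a vector-dependent prefactor, and the $\max(a^2,b^2)$ lever forces a $\|\mathbf w^\star\|_2^2$ factor that only absorbs cleanly when combined with $\|\mathbf p_0\|_1=\|\mathbf w^\star\|_2^2$ in the normalization, landing precisely on $\sqrt{N\lambda_{\max}(\tilde{\mathbf R}_s)}$ rather than on a looser $N$-dependent constant.
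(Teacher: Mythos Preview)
Your Laplacian-energy step and the KKT reference-vector step are essentially the paper's (the paper uses the equivalent $\|\mathbf p_0\|_2=\|\mathbf w^\star\|_4^2\ge\|\mathbf w^\star\|_2^2/\sqrt N$ in place of your $\|\mathbf p_0\|_1=\|\mathbf w^\star\|_2^2$ route, arriving at the same $\sqrt{N\lambda_{\max}(\tilde{\mathbf R}_s)}$). The gap is in your steady-state expansion. The identity $\mathbf p_\infty-\mathbf p_0=-\rho(\Id-\rho\mathbf G)^{-1}(\Id-\mathbf G)\mathbf p_0$ is correct, but pairing the scalar $\rho$ with the Neumann bound $\|(\Id-\rho\mathbf G)^{-1}\|_2\le(1-\rho\|\mathbf G\|_2)^{-1}$ yields the prefactor $\rho/(1-\rho\|\mathbf G\|_2)$, \emph{not} $C(\rho,\mathbf G)=(1-\rho)\rho/(1-\rho\|\mathbf G\|_2)$; you are missing the factor $(1-\rho)$ throughout, so the argument establishes only a strictly weaker inequality than \eqref{eq:xiBound-correct}. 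In addition, your degree estimate $\|\Id-\mathbf D\|_2\le d_{\max}$ is false in general: $\|\Id-\mathbf D\|_2=\max_i|1-D_{ii}|$, which can exceed $d_{\max}$ (e.g., an isolated node when $d_{\max}<1$), so the additive constant does not land on $d_{\max}$ either.

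The paper avoids both problems by a different first step. After writing $\mathbf p_\infty-\mathbf p_0=[(1-\rho)(\Id-\rho\mathbf G)^{-1}-\Id]\mathbf p_0$, it expands $(\Id-\rho\mathbf G)^{-1}$ as a Neumann series and pulls one factor of $\mathbf G$ out of each surviving term, arriving at $\xi\le C(\rho,\mathbf G)\,\|\mathbf G\mathbf p_0\|_2/\|\mathbf p_0\|_2$ with the $(1-\rho)$ inherited from the steady-state formula itself. Only then is $\mathbf G=\mathbf D-\mathbf L$ split, giving $\|\mathbf D\mathbf p_0\|_2\le d_{\max}\|\mathbf p_0\|_2$ cleanly with no $\Id-\mathbf D$ detour. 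To recover the stated constants you must abandon the $(\Id-\mathbf G)$ factorisation and instead extract a single $\mathbf G$ from the Neumann series, as the paper does; after that, your remaining two steps plug in unchanged.
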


\begin{proof}
\textbf{Step 1: Rewrite $\xi$.}
The recursion \eqref{eq:model-diffusion} is affine-linear 
with fixed input $(1-\rho)\,p_0$, and its fixed point is
\[
\mathbf{p}_{\infty}=(\Id-\rho\mathbf{G})^{-1}(1-\rho)\mathbf{p}_0.
\]
Therefore
\[
\xi = \frac{\|[(\Id-\rho\mathbf{G})^{-1}(1-\rho)-\Id]\mathbf{p}_0\|_2}{\|\mathbf{p}_0\|_2}.
\]

\textbf{Step 2: Neumann series.}
Since $\rho\|\mathbf{G}\|_2<1$,
\[
(\Id-\rho\mathbf{G})^{-1} = \sum_{k=0}^{\infty}(\rho\mathbf{G})^{k}.
\]
Discard the $k=0$ term and use the triangle inequality to obtain
\begin{align}
\xi &\leq (1-\rho)\sum_{k=1}^{\infty}\rho^{k}\frac{\|\mathbf{G}^{k}\mathbf{p}_0\|_2}{\|\mathbf{p}_0\|_2}
\leq (1-\rho)\sum_{k=1}^{\infty}\rho^{k}\|\mathbf G\|_2^{k-1}\frac{\|\mathbf G\mathbf p_0\|_2}{\|\mathbf p_0\|_2}
\\
&= (1-\rho)\,\frac{\rho}{1-\rho\|\mathbf{G}\|_2}\cdot \frac{\|\mathbf{G}\mathbf{p}_0\|_2}{\|\mathbf{p}_0\|_2}.
\tag{A}
\end{align}

\textbf{Step 3: Relate $\mathbf{G}\mathbf{p}_0$ to $\mathbf{L}$.}
Write $\mathbf p_0=|\mathbf w^\star|^2$ and $\mathbf L=\mathbf D-\mathbf G$ with $\mathbf G=\mathbf G^\top\ge0$.
For any $\mathbf v$,
\begin{equation}\label{eq:Gsplit}
\|\mathbf G\mathbf v\|_2 \;\le\; d_{\max}\,\|\mathbf v\|_2 \;+\; \sqrt{\lambda_{\max}(\mathbf L)}\,\sqrt{\mathbf v^H\mathbf L\,\mathbf v}.
\end{equation}
Since $\mathbf G=\mathbf D-\mathbf L$, we have $\|\mathbf G\mathbf v\|_2\le \|\mathbf D\mathbf v\|_2+\|\mathbf L\mathbf v\|_2$, and by $\mathbf L$-inner-product arguments $\|\mathbf L\mathbf v\|_2\le\sqrt{\lambda_{\max}(\mathbf L)}\,\sqrt{\mathbf v^H\mathbf L\,\mathbf v}$.
Using the edge form of $\mathbf L$ and the inequality $\lvert |a|^2-|b|^2\rvert \le |a-b|\,\big(|a|+|b|\big)\le 2\|(a,b)\|_\infty\,|a-b|$,
for $\mathbf p_0=|\mathbf w^\star|^2$ we get
\begin{equation}\label{eq:p0Lp0}
\mathbf p_0^H\mathbf L\,\mathbf p_0 \;\le\; 4\,\|\mathbf w^\star\|_2^2\,\mathbf w^{\star H}\mathbf L\,\mathbf w^\star.
\end{equation}
Combining \eqref{eq:Gsplit}--\eqref{eq:p0Lp0} with $\mathbf v=\mathbf p_0$ and dividing by $\|\mathbf p_0\|_2$ yields
\begin{equation}\tag{B$'$}
\frac{\|\mathbf G\mathbf p_0\|_2}{\|\mathbf p_0\|_2}
\;\le\; d_{\max}
\;+\; \sqrt{4\,\lambda_{\max}(\mathbf L)}\,
\frac{\|\mathbf w^\star\|_2}{\|\mathbf p_0\|_2}\,
\sqrt{\mathbf w^{\star H}\mathbf L\,\mathbf w^\star}.
\end{equation}

\textbf{Step 4: Influence of $\mu$ (KKT with a reference vector).}
The first-order (KKT) conditions for \eqref{eq:constrained} give
\[
(\mathbf R_{\mathrm{in}}+\mu \mathbf L)\,\mathbf w^\star=\lambda^\star\,\tilde{\mathbf R}_s\,\mathbf w^\star, 
\qquad \mathbf w^{\star H}\tilde{\mathbf R}_s\mathbf w^\star=1,
\]
so $\lambda^\star=\mathbf w^{\star H}\mathbf R_{\mathrm{in}}\mathbf w^\star+\mu\,\mathbf w^{\star H}\mathbf L\mathbf w^\star\ge \mu\,\mathbf w^{\star H}\mathbf L\mathbf w^\star$.
For any feasible $\mathbf v\neq 0$ with $\mathbf v^H\tilde{\mathbf R}_s\mathbf v=1$,
\[
\lambda^\star\;=\;\min_{\mathbf w^H\tilde{\mathbf R}_s\mathbf w=1}\ \big(\mathbf w^H\mathbf R_{\mathrm{in}}\mathbf w+\mu\,\mathbf w^H\mathbf L\mathbf w\big)
\;\le\; \mathbf v^H\mathbf R_{\mathrm{in}}\mathbf v+\mu\,\mathbf v^H\mathbf L\mathbf v.
\]
Choosing $\mathbf v = \dfrac{\one}{\sqrt{\one^\top\tilde{\mathbf R}_s\one}}$ gives $\mathbf v^H\tilde{\mathbf R}_s\mathbf v=1$ and $\mathbf v^H\mathbf L\mathbf v=0$ (since $\mathbf L\one=\mathbf 0$), whence
\[
\lambda^\star\;\le\; \Lambda_{\mathrm{ref}}:=\frac{\one^\top\mathbf R_{\mathrm{in}}\one}{\one^\top\tilde{\mathbf R}_s\one},
\qquad\Rightarrow\qquad
\mathbf w^{\star H}\mathbf L\mathbf w^\star\;\le\;\frac{\Lambda_{\mathrm{ref}}}{\mu}.
\tag{C}
\]
\emph{More generally, for any feasible reference vector $\mathbf v$ with $\mathbf v^H\tilde{\mathbf R}_s\mathbf v=1$, we obtain}
\[
\lambda^\star\le \Lambda_{\mathrm{ref}}(\mathbf v):=\mathbf v^H\mathbf R_{\mathrm{in}}\mathbf v+\mu\,\mathbf v^H\mathbf L\mathbf v,
\quad\text{so}\quad
\mathbf w^{\star H}\mathbf L\mathbf w^\star\le \frac{\Lambda_{\mathrm{ref}}(\mathbf v)}{\mu}.
\]
\emph{If additionally $\mathbf v\in\ker\mathbf L$ (e.g., $\mathbf v\propto \one$), then $\Lambda_{\mathrm{ref}}(\mathbf v)$ is $\mu$-independent, yielding the cleanest constant.}

\textbf{Step 5: Combine.}
By the standard norm relation $\|w\|_4 \ge N^{-1/4}\|w\|_2$,
\[
\|\mathbf p_0\|_2 = \||\mathbf w^\star|^2\|_2 = \|\mathbf w^\star\|_4^2
\;\ge\; \frac{\|\mathbf w^\star\|_2^2}{\sqrt{N}},
\]
so
\[
\frac{\|\mathbf w^\star\|_2}{\|\mathbf p_0\|_2}
\;\le\; \frac{\sqrt{N}}{\|\mathbf w^\star\|_2}.
\]
Under $\mathbf w^{\star H}\tilde{\mathbf R}_s\mathbf w^\star=1$ we have
$1 \le \lambda_{\max}(\tilde{\mathbf R}_s)\,\|\mathbf w^\star\|_2^2$, hence
$\|\mathbf w^\star\|_2 \ge 1/\sqrt{\lambda_{\max}(\tilde{\mathbf R}_s)}$, so
\[
\frac{1}{\|\mathbf w^\star\|_2}
\;\le\; \sqrt{\lambda_{\max}(\tilde{\mathbf R}_s)},
\]
and therefore
\[
\frac{\|\mathbf w^\star\|_2}{\|\mathbf p_0\|_2}
\;\le\; \sqrt{N\,\lambda_{\max}(\tilde{\mathbf R}_s)}.
\]
Substituting (B$'$) and (C) into (A) yields
\[
\xi \;\le\; \frac{(1-\rho)\rho}{1-\rho\|\mathbf G\|_2}\;
\Big(\, d_{\max} \;+\; \sqrt{4\,N\,\lambda_{\max}(\mathbf L)\,\Lambda_{\mathrm{ref}}\,\lambda_{\max}(\tilde{\mathbf R}_s)/\mu}\,\Big),
\]
which is \eqref{eq:xiBound-correct} (with $d_{\max}=\|\mathbf D\|_2$).
\end{proof}

\begin{remark}[Euclidean normalisation variant]
If, instead of $\,\mathbf w^H\tilde{\mathbf R}_s\mathbf w=1\,$, we enforce the Euclidean constraint $\,\|\mathbf w\|_2=1\,$ in \eqref{eq:constrained}, the same argument yields
\[
\xi \;\le\; C(\rho,\mathbf G)\Big(d_{\max}+\sqrt{\tfrac{4\,\lambda_{\max}(\mathbf L)\,(\one^\top\mathbf R_{\mathrm{in}}\one)}{\mu}}\Big).
\]
Here $\,\mathbf w^{\star H}\mathbf L\mathbf w^\star\le \tfrac{1}{N}\tfrac{\one^\top\mathbf R_{\mathrm{in}}\one}{\mu}\,$ from the KKT step with $\,\mathbf v=\one/\sqrt{N}\,$, and $\,\|\mathbf p_0\|_2=\|\mathbf w^\star\|_4^2\ge N^{-1/2}\,$ under $\|\mathbf w^\star\|_2=1$, so the $\sqrt{N}$ and $1/\sqrt{N}$ factors cancel, removing both $N$ and $\lambda_{\max}(\tilde{\mathbf R}_s)$.
\end{remark}

\begin{remark}[Physical interpretation]
The bound decays as $1/\sqrt{\mu}$, so doubling $\mu$ reduces spreading by a factor $\sqrt{2}$. The additive degree term $d_{\max}$ quantifies baseline coupling: when degrees are large, diffusion pressure persists even for strongly smoothed designs.
\end{remark}

\begin{remark}[Bound tightness and normalisation choice]
The estimate leverages the triangle and Cauchy--Schwarz inequalities and is generally not tight. Tightness is approached when $\mathbf p_0$ aligns with the Perron vector of $\mathbf G$ (dominant eigenvector of a symmetric nonnegative matrix).
\textbf{If instead you enforce the Euclidean normalisation $\|\mathbf w\|_2=1$}
in \eqref{eq:constrained}, Step~4 still holds with the reference $\mathbf v=\one/\sqrt{N}$
and yields $\Lambda_{\mathrm{ref}}=(\one^\top\mathbf R_{\mathrm{in}}\one)/N$; moreover,
the factor $\sqrt{\lambda_{\max}(\tilde{\mathbf R}_s)}$ in \eqref{eq:xiBound-correct} disappears
\textbf{and the factor $\sqrt{N}$ cancels, so the $1/\sqrt{\mu}$ term depends only on
$\lambda_{\max}(\mathbf L)$ and $\one^\top\mathbf R_{\mathrm{in}}\one$.}
\end{remark}

\begin{corollary}[Design guideline]\label{cor:design}
Let $C(\rho,\mathbf G):=\frac{(1-\rho)\rho}{1-\rho\|\mathbf G\|_2}$ and $\Lambda_{\mathrm{ref}}=\frac{\one^\top\mathbf R_{\mathrm{in}}\one}{\one^\top\tilde{\mathbf R}_s\one}$. To guarantee a target $\xi\le \xi_{\mathrm{target}}$, it suffices to choose $\mu$ such that
\[
\xi_{\mathrm{target}} \;>\; C(\rho,\mathbf G)\,d_{\max}
\quad\text{and}\quad
\boxed{\;
\mu \;\ge\;
\frac{4\,N\,C(\rho,\mathbf G)^2\,\lambda_{\max}(\mathbf L)\,\Lambda_{\mathrm{ref}}\,\lambda_{\max}(\tilde{\mathbf R}_s)}
{\big(\,\xi_{\mathrm{target}}-C(\rho,\mathbf G)\,d_{\max}\,\big)^2}}.
\]
If $\xi_{\mathrm{target}}\le C(\rho,\mathbf G)\,d_{\max}$, the bound cannot enforce the target for any finite $\mu$.
\end{corollary}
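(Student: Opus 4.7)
The plan is to derive the corollary as a direct algebraic inversion of the bound already established in Theorem~\ref{thm:spreadBound}. Since that bound is of the form $\xi\le C(\rho,\mathbf G)\,(d_{\max}+\sqrt{K/\mu})$ with $K:=4N\lambda_{\max}(\mathbf L)\,\Lambda_{\mathrm{ref}}\,\lambda_{\max}(\tilde{\mathbf R}_s)$, it suffices to force the right-hand side below $\xi_{\mathrm{target}}$ and then solve the resulting scalar inequality for $\mu$. No further graph-theoretic or KKT work is needed; all such content has been absorbed into Theorem~\ref{thm:spreadBound}.

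First I would isolate the $\mu$-dependent term: the sufficient condition
\[
C(\rho,\mathbf G)\,d_{\max}+C(\rho,\mathbf G)\sqrt{K/\mu}\;\le\;\xi_{\mathrm{target}}
\]
is equivalent to $C(\rho,\mathbf G)\sqrt{K/\mu}\le \xi_{\mathrm{target}}-C(\rho,\mathbf G)\,d_{\max}$. Because the left side is nonnegative, a real positive $\mu$ can satisfy this only when the right side is strictly positive, which yields the feasibility floor $\xi_{\mathrm{target}}>C(\rho,\mathbf G)\,d_{\max}$. Under that condition both sides are nonnegative, so I may square without loss, obtaining
\[
\mu\;\ge\;\frac{C(\rho,\mathbf G)^2\,K}{(\xi_{\mathrm{target}}-C(\rho,\mathbf G)\,d_{\max})^2}
\;=\;\frac{4NC(\rho,\mathbf G)^2\lambda_{\max}(\mathbf L)\,\Lambda_{\mathrm{ref}}\,\lambda_{\max}(\tilde{\mathbf R}_s)}{(\xi_{\mathrm{target}}-C(\rho,\mathbf G)\,d_{\max})^2},
\]
which is the boxed expression. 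Conversely, any such $\mu$ makes the bound in Theorem~\ref{thm:spreadBound} at most $\xi_{\mathrm{target}}$, certifying the guarantee.

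The one conceptual point worth making explicit, rather than a genuine obstacle, is the \emph{impossibility clause}: if $\xi_{\mathrm{target}}\le C(\rho,\mathbf G)\,d_{\max}$, then even as $\mu\to\infty$ the Laplacian-controlled term vanishes but the irreducible degree term $C(\rho,\mathbf G)\,d_{\max}$ remains, so the bound can never fall below the target. This is not a limitation of the algebra but of the degree-driven coupling captured by Theorem~\ref{thm:spreadBound}. I would close the proof with a one-line remark that the sufficient condition is monotone in $\mu$ (the right-hand side of the bound is decreasing in $\mu$), so the threshold $\mu$ is the smallest value satisfying the inequality, making the design rule tight with respect to the theorem's bound even though the bound itself is conservative.
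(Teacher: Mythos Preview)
Your proposal is correct and matches the paper's intended derivation: the paper does not give a separate proof of Corollary~\ref{cor:design} at all, treating it as an immediate algebraic inversion of the bound in Theorem~\ref{thm:spreadBound}, which is exactly what you have written out. Your handling of the feasibility floor and the impossibility clause is also consistent with the paper's discussion.
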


\begin{remark}[Design rule under $\|\mathbf w\|_2=1$]
With the Euclidean constraint, a sufficient choice is
\[
\mu \;\ge\;
\frac{4\,C(\rho,\mathbf G)^2\,\lambda_{\max}(\mathbf L)\,(\one^\top\mathbf R_{\mathrm{in}}\one)}
{\big(\xi_{\mathrm{target}}-C(\rho,\mathbf G)d_{\max}\big)^2}.
\]
\end{remark}

\paragraph{Practical impact.}
Corollary~\ref{cor:design} links an interference budget $\xi_{\mathrm{target}}$ directly to a closed-form choice of $\mu$, removing trial-and-error in tuning Laplacian-regularised designs. The baseline term $C(\rho,\mathbf G)\,d_{\max}$ highlights when network degrees alone exceed the target, signalling the need to redesign the network/graph or reduce $\rho$.

\paragraph{Conservatism in practice.}
Since the theorem is worst-case, the rule is \emph{sufficient} by design. A two-step tuning procedure is: (i) compute $\mu$ from the corollary; (ii) verify the obtained $\xi$ on the instance and, if desired, reduce $\mu$ while maintaining the target with margin.

\section{Results and Discussion}\label{sec:results}

\textbf{Evaluation graph construction (reproducibility).}
We generate an undirected weighted graph $\mathbf G\in\mathbb R^{N\times N}$ from node locations $x_i\in[0,1]^2$ sampled i.i.d.\ uniformly with a fixed random seed, using $N=120$ nodes. For each node $i$, we connect it to its $k$ nearest neighbours in Euclidean distance (with $k=6$) and assign weights $G_{ij}=\exp(-d_{ij}/\sigma_i)$ for connected pairs, where $d_{ij}=\|x_i-x_j\|_2$ and $\sigma_i$ is the median distance from node $i$ to its $k$ neighbours. We then symmetrise the adjacency via $\mathbf G\leftarrow\max(\mathbf G,\mathbf G^\top)$, set $G_{ii}=0$, and normalise by the maximum row sum so that $d_{\max}=\max_i\sum_j G_{ij}\approx 1$. Finally, we select $\rho\in(0,1)$ with a fixed safety margin so that $\rho\|\mathbf G\|_2<1$ (in the implementation, a $10\%$ stability margin is used and $\rho$ is capped at $0.15$), ensuring that the steady state $\mathbf p_\infty=(\Id-\rho\mathbf G)^{-1}(1-\rho)\mathbf p_0$ and the spreading metric $\xi$ are well-defined. It should be noted that for visualisation, we normalise $\mathbf p_0$ to sum to one; the metric $\xi$ is scale-invariant, so this normalisation does not affect the reported values.

The signal and interference covariances follow the model of Section~5.1, and the constants $d_{\max}$, $C(\rho,G)$, $\lambda_{\max}(L)$, and $\Lambda_{\mathrm{ref}}$ are computed from this instance. We include compact visual evidence to contextualise the theory without relying on implementation details. Figure~\ref{fig:schematic} provides a high-level schematic of the pipeline and constants that enter the bound. Figure~\ref{fig:xi-vs-bound} plots spreading $\xi(\mu)$ against the theoretical upper bound as $\mu$ varies on a log grid for a representative instance. Figure~\ref{fig:rho-effect} shows how both the bound and measurements change with the propagation factor $\rho$ at a fixed~$\mu$.

For the instance shown, the computed constants were $d_{\max}\!\approx\!1$, $\|\mathbf G\|_2\!\in(0,1)$ by construction, $\lambda_{\max}(\mathbf L)$ on the order of unity, and $\Lambda_{\mathrm{ref}}$ and $\lambda_{\max}(\tilde{\mathbf R}_s)$ determined from the synthetic covariances; together with the chosen $\rho$ these yield a modest prefactor $C(\rho,\mathbf G)$ and a small floor $C(\rho,\mathbf G)d_{\max}$.

In Figure~\ref{fig:xi-vs-bound} (log--log axes), the empirical spreading $\xi(\mu)$ is plotted against the theoretical bound of Theorem~\ref{thm:spreadBound}. A dotted horizontal line marks the feasibility floor $C(\rho,\mathbf G)\,d_{\max}$ for the \emph{bound}. A dash--dot reference line of slope $-1/2$ highlights the predicted $O(\mu^{-1/2})$ decay. Over the shown $\mu$-range the $\mu^{-1/2}$ term dominates the constant $d_{\max}$, so the bound appears nearly linear on log--log axes. The upper bound (certificate) is conservative but respected across the sweep. A log--log least-squares fit of $\log\xi(\mu)$ versus $\log\mu$ over the mid-range yields an empirical slope close to $-0.5$, corroborating the predicted scaling. Across the sweep, the theoretical certificate consistently upper-bounded all measurements, as predicted and in line with the analysis.

Figure~\ref{fig:rho-effect} studies sensitivity to the propagation factor: at fixed $\mu=10$, both the bound and the measured $\xi$ increase monotonically with $\rho$, mirroring the prefactor $C(\rho,\mathbf G)$ and illustrating the role of dynamic coupling.

\begin{remark}[Bend point of the bound]
The transition between the $\mu^{-1/2}$ regime and the degree-driven floor occurs around
$\displaystyle \mu_\star \;=\; \frac{4\,N\,\lambda_{\max}(\mathbf L)\,\Lambda_{\mathrm{ref}}\,\lambda_{\max}(\tilde{\mathbf R}_s)}{d_{\max}^2}$.
For $\mu \ll \mu_\star$ the bound follows $O(\mu^{-1/2})$; for $\mu \gg \mu_\star$ it flattens to $C(\rho,\mathbf G)\,d_{\max}$.
\end{remark}

\begin{figure}[t]
\centering
\resizebox{\linewidth}{!}{
\begin{tikzpicture}[node distance=12mm, >=Latex, font=\small]
\tikzset{block/.style={draw, rounded corners, align=center, inner sep=4pt, outer sep=2pt},
thinblock/.style={draw, rounded corners, align=left, inner sep=3pt, outer sep=2pt}}
\node[block, text width=52mm] (design) {Design $\,\mathbf w^\star$\\[1mm]
$\displaystyle \min_{\mathbf w^H\tilde{\mathbf R}_s\mathbf w=1}
\ \mathbf w^H\mathbf R_{\mathrm{in}}\mathbf w + \mu\,\mathbf w^H\mathbf L\mathbf w$};
\node[block, right=18mm of design] (p0) {Initial pattern\\[1mm] $\mathbf p_0 = |\mathbf w^\star|^2$};
\node[block, right=18mm of p0, text width=44mm] (diff) {Graph diffusion\\[1mm]
$\displaystyle \mathbf p_{t+1}=\rho\,\mathbf G\,\mathbf p_t + (1-\rho)\,\mathbf p_0$\\[0.5mm]
(stable if $\,\rho\|\mathbf G\|_2<1$)};
\node[block, right=18mm of diff, text width=44mm] (pinf) {Steady state \& spreading\\[1mm]
$\displaystyle \mathbf p_\infty=(\Id-\rho\mathbf G)^{-1}(1-\rho)\mathbf p_0$\\[0.5mm]
$\displaystyle \xi=\frac{\|\mathbf p_\infty-\mathbf p_0\|_2}{\|\mathbf p_0\|_2}$};
\node[thinblock, above=12mm of diff, text width=64mm] (const) {\textbf{Instance constants}\\[0.5mm]
$C(\rho,\mathbf G)=\dfrac{(1-\rho)\rho}{1-\rho\|\mathbf G\|_2}$, \;
$d_{\max}=\|\mathbf D\|_2$, \;
$\lambda_{\max}(\mathbf L)$, \;
$\lambda_{\max}(\tilde{\mathbf R}_s)$, \;
$\Lambda_{\mathrm{ref}}=\dfrac{\one^\top\mathbf R_{\mathrm{in}}\one}{\one^\top\tilde{\mathbf R}_s\one}$};
\node[thinblock, below=12mm of diff, text width=76mm] (bound) {\textbf{Upper bound (Theorem~\ref{thm:spreadBound})}\\[0.5mm]
$\displaystyle \xi \le C(\rho,\mathbf G)\Big(d_{\max} + 
\sqrt{\tfrac{4\,N\,\lambda_{\max}(\mathbf L)\,\Lambda_{\mathrm{ref}}\,
\lambda_{\max}(\tilde{\mathbf R}_s)}{\mu}}\Big)$};
\draw[->] (design) -- (p0);
\draw[->] (p0) -- (diff);
\draw[->] (diff) -- (pinf);
\draw[->, dashed] (const) -- (diff);
\draw[->, dashed] (const) -- (bound);
\draw[->, dashed] (diff) -- (bound);
\node[align=left, anchor=north west, text width=0.98\linewidth] at ($(design.south west)+(0mm,-5mm)$) {$\mathbf R_{\mathrm{in}}=\tilde{\mathbf R}_i+\sigma^2\Id+\alpha\frac{\mathrm{tr}(\tilde{\mathbf R}_i)}{N}\Id,\quad
\mathbf L=\mathbf D-\mathbf G$.};
\end{tikzpicture}
}
\caption{\textbf{Conceptual pipeline: design and diffusion.}
The left block represents the Laplacian-regularised design stage, where the optimisation in~\eqref{eq:constrained} produces $\mathbf p_0=|\mathbf w^\star|^2$, while the right block represents the diffusion stage, where $\mathbf p_t$ evolves on the same graph according to~\eqref{eq:model-diffusion} until it reaches $\mathbf p_\infty$. Theorem~\ref{thm:spreadBound} controls the spreading $\xi$ via instance constants, with an explicit $1/\sqrt{\mu}$ decay.}
\label{fig:schematic}
\end{figure}

\begin{figure*}[t]
  \centering
  \includegraphics[width=0.95\textwidth]{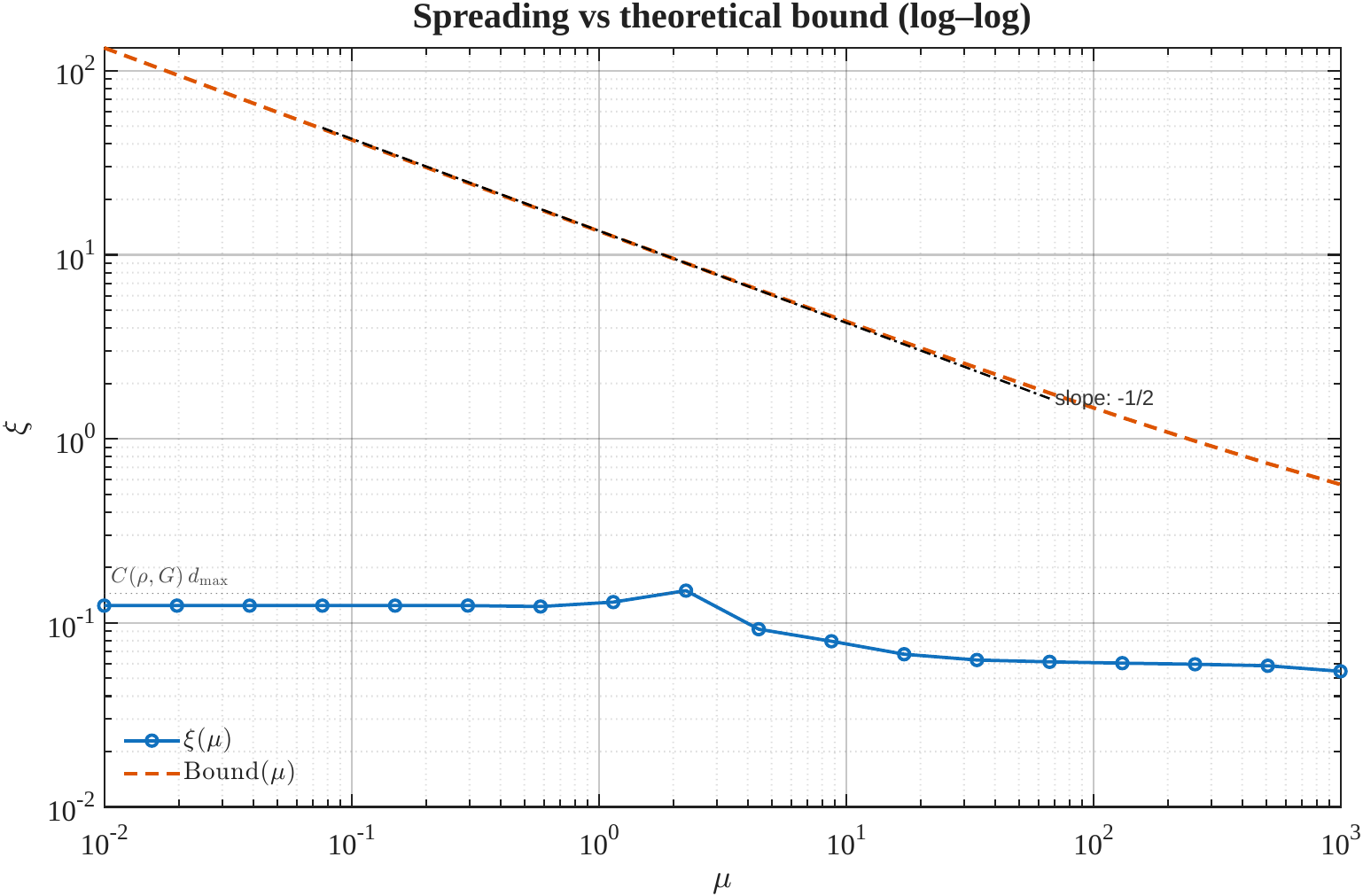}
  \caption{\textbf{Empirical spreading vs.\ theoretical bound (log--log).} Measured $\xi(\mu)$ (blue) decreases with $\mu$ and remains below the instance-wise upper bound from Theorem~\ref{thm:spreadBound} (orange dashed). A dash--dot reference line of slope $-1/2$ highlights the predicted $O(\mu^{-1/2})$ scaling. The dotted line marks the feasibility floor $C(\rho,\mathbf G)\,d_{\max}$ for the \emph{bound}. Over this $\mu$ range, the $\mu^{-1/2}$ term dominates, so the bound is nearly linear on log--log axes; it bends toward $C(\rho,\mathbf G)\,d_{\max}$ only for much larger $\mu$.}
  \label{fig:xi-vs-bound}
\end{figure*}

\begin{figure*}[t]
  \centering
  \includegraphics[width=0.80\textwidth]{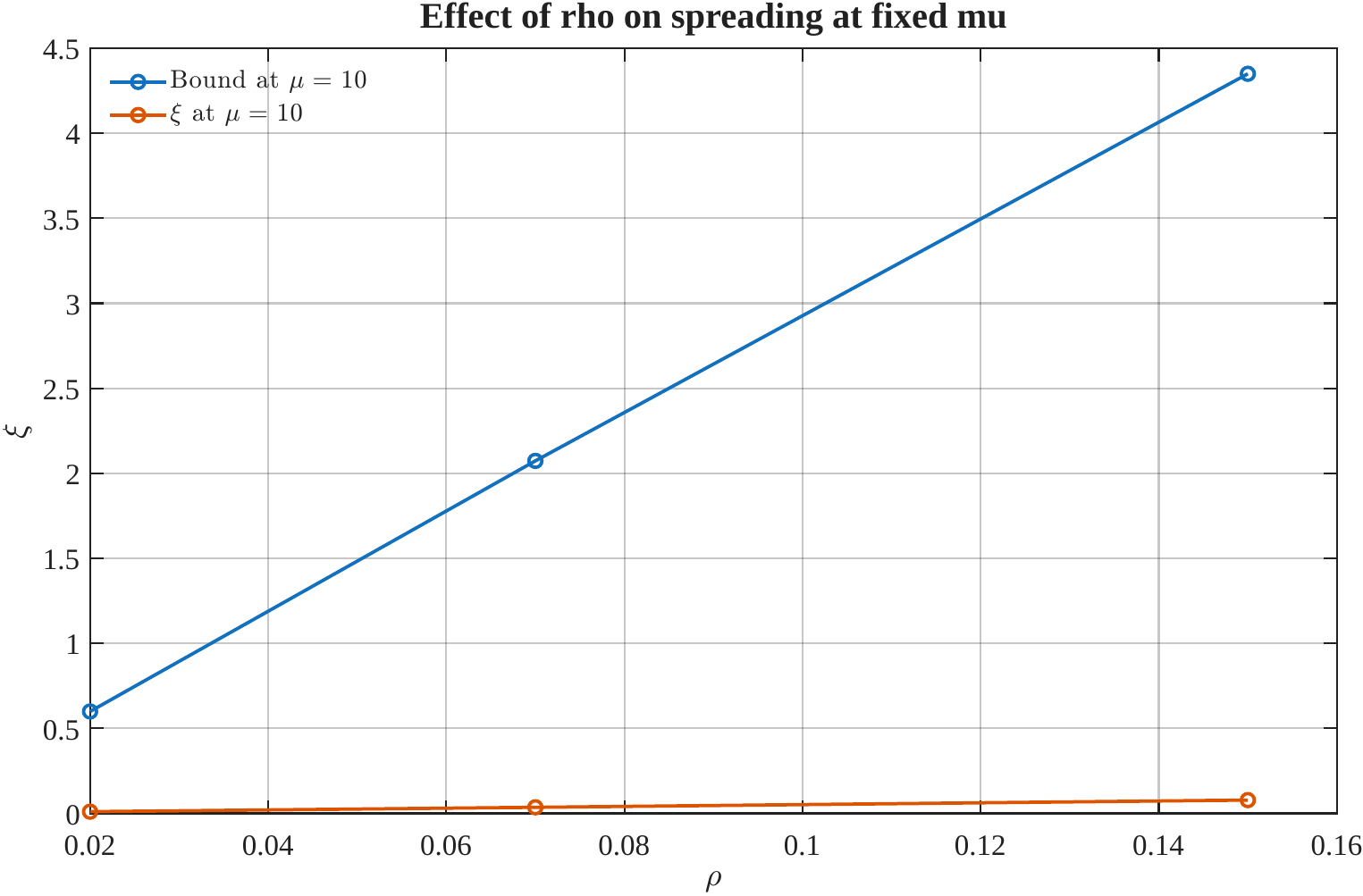}
  \caption{\textbf{Effect of propagation factor $\rho$ at fixed $\mu=10$.} Both the bound and the empirical $\xi$ increase monotonically with $\rho$, consistent with $C(\rho,\mathbf G)=\frac{(1-\rho)\rho}{1-\rho\|\mathbf G\|_2}$. For small $\rho$, the measured $\xi$ is close to zero because diffusion exerts only a weak push on the smoothed initial profile $\mathbf p_0$, while the worst-case bound varies according to $C(\rho,\mathbf G)$.}
  \label{fig:rho-effect}
\end{figure*}

\paragraph{Summary of contributions.}
We established a closed-form upper bound on spreading for a graph diffusion model driven by an initial profile $\mathbf p_0=|\mathbf w^\star|^2$, where $\mathbf w^\star$ solves a Laplacian-regularised design. The main result,
\[
\xi \;\le\; C(\rho,\mathbf G)\,\Big(\, d_{\max} \;+\; 
\sqrt{4\,N\,\lambda_{\max}(\mathbf L)\,\Lambda_{\mathrm{ref}}\,\lambda_{\max}(\tilde{\mathbf R}_s)/\mu}\,\Big),
\]
proves an explicit $O(\mu^{-1/2})$ decay of spreading and yields a practical design rule for selecting the regularisation strength $\mu$.

\paragraph{Interpretation of constants.}
The prefactor $C(\rho,\mathbf G)=\frac{(1-\rho)\rho}{1-\rho\|\mathbf G\|_2}$ captures how strongly the network accumulates interference over time; it diverges as $\rho\|\mathbf G\|_2\uparrow 1$.
The additive term $d_{\max}$ represents an \emph{irreducible} baseline due to node degrees: even when smoothing is strong, high-degree (highly connected) nodes contribute more strongly to diffusion pressure.
The term $\Lambda_{\mathrm{ref}}=\frac{\one^\top\mathbf R_{\mathrm{in}}\one}{\one^\top\tilde{\mathbf R}_s\one}$ is a computable ratio of problem covariances evaluated on the all-ones vector; using $\one$ is natural because $\mathbf L\one=\mathbf 0$, which eliminates the Laplacian penalty in the KKT upper bound.

\paragraph{Assumptions.}
We assume $\mathbf G$ is symmetric and entrywise nonnegative (undirected weighted graphs), implying $\mathbf L=\mathbf D-\mathbf G\succeq 0$, and that $\rho\|\mathbf G\|_2<1$ so the fixed point of the diffusion exists. The weights are computed and formulated with a Laplacian penalty \emph{outside} $\mathbf R_{\mathrm{in}}$, ensuring the $1/\sqrt{\mu}$ scaling. The analysis is deterministic and worst-case; it does not rely on distributional assumptions.

\paragraph{Tightness and conservatism.}
The certificate leverages standard inequalities (e.g., triangle and Cauchy–Schwarz) to prioritise robustness. A key source of looseness is Step~3, where we use $\|\mathbf p_0\|_2=\||\mathbf w^\star|^2\|_2=\|\mathbf w^\star\|_4^2\ge \|\mathbf w^\star\|_2^2/\sqrt N$ (norm monotonicity), which introduces an explicit $N$ factor and trades tightness for generality. Sharper control is possible under additional structure (e.g., bounded dynamic range or $\|\mathbf w^\star\|_\infty/\|\mathbf w^\star\|_2$ constraints), which would replace this step by a smaller factor and reduce the observed slack. It becomes tighter when $\mathbf p_0$ aligns with the Perron eigenvector of $\mathbf G$. Although constants may be improved for particular graph families (e.g., regular graphs or banded Toeplitz structures), the $1/\sqrt{\mu}$ scaling is fundamental to Laplacian smoothing.

\paragraph{Design implications.}
The corollary converts a spreading budget $\xi_{\mathrm{target}}$ into a closed-form choice of $\mu$, subject to the feasibility condition $\xi_{\mathrm{target}} > C(\rho,\mathbf G)\,d_{\max}$. If this condition is not met, the target cannot be \emph{certified} under the current graph and $\rho$; practitioners can then (i) reduce $\rho$, (ii) sparsify or reweight $\mathbf G$ to lower $d_{\max}$ and $\|\mathbf G\|_2$, or (iii) alter the geometry that induces $\mathbf G$.

In practice, one typically chooses $\mu$ so that the Laplacian term $\mu\,\mathbf w^H \mathbf L \mathbf w$ is of comparable order to the incoherence term $\mathbf w^H \mathbf R_{\mathrm{in}} \mathbf w$, rather than letting the design be dominated by regularisation alone. The design rule in Corollary~\ref{cor:design} should therefore be interpreted as providing a \emph{sufficient} value of $\mu$ to certify a given spreading budget, after which $\mu$ can be reduced while monitoring both $\xi$ and the primary performance metric (e.g., SINR or beam pattern quality). Very large $\mu$ do recover the asymptotic $1/\sqrt{\mu}$ scaling of the bound, but may yield overly smooth designs that are less useful in some applications; our recommended operating regime balances this trade-off.

\paragraph{Normalisation choices.}
We analysed the constrained form $\mathbf w^H\tilde{\mathbf R}_s\mathbf w=1$. If instead one enforces $\|\mathbf w\|_2=1$, the bound retains the same structure and the $\sqrt{\lambda_{\max}(\tilde{\mathbf R}_s)}$ factor vanishes, with $\Lambda_{\mathrm{ref}}$ replaced by $(\one^\top\mathbf R_{\mathrm{in}}\one)/N$. Both variants produce the same $1/\sqrt{\mu}$ law.

\paragraph{Future work.}
Natural extensions include: (i) directed or asymmetric graphs (requiring a suitable Laplacian generalisation and careful handling of non-normal $\mathbf G$); (ii) time-varying graphs $\mathbf G_t$ or propagation factors $\rho_t$; (iii) stochastic diffusion models with additive process noise; (iv) alternative smoothness priors (e.g., higher-order or total-variation graph penalties); and (v) data-driven graph learning jointly with the design stage. Empirical studies can compare graphs (line, grid, random geometric), and validate design rules for $\mu$. The proof technique applies beyond array processing to any setting where an initial profile diffuses on a graph and one can shape that profile via Laplacian regularisation (e.g., sensor networks and diffusion-based GNNs).

\section{Limitations}\label{sec:limitations}
Our guarantees take into account several modelling choices and inequalities. We group the main caveats by theme and indicate how a practitioner might mitigate each one. Taken together, these caveats delineate the envelope within which our framework is most informative. Rather than replacing application-specific metrics, the bound is best viewed as a lightweight certification layer that can be placed on top of existing design pipelines. Within the regime of linear, stable diffusion and Laplacian regularisation, it turns a handful of spectral descriptors of the graph and covariance structure into a guarantee on how far an engineered initial profile may drift under network dynamics. This separation between structural quantities and design choices makes the framework portable across domains that share a graph-based notion of neighbourhood but differ in their physical interpretation, while the limitations clarify which modelling ingredients must be revisited when moving beyond this regime.

\paragraph{Structural assumptions on the graph.}
We assume an undirected, entrywise nonnegative adjacency $\mathbf G=\mathbf G^\top\!\ge 0$, so that the combinatorial Laplacian $\mathbf L=\mathbf D-\mathbf G\succeq 0$ is well-defined. Directed or signed graphs require alternative Laplacian constructions and non-normal matrix tools; the present proof (which exploits $\mathbf L$-energy and symmetry) does not carry over directly. \emph{Mitigation:} for directed graphs, consider symmetrisation ($(\mathbf G+\mathbf G^\top)/2$) as a conservative surrogate, or develop bounds via the Hermitian part of $\mathbf G$ and pseudospectral analysis.

\paragraph{Stability and prefactor growth.}
The fixed point of the diffusion exists only when $\rho\|\mathbf G\|_2<1$. The bound's prefactor
$C(\rho,\mathbf G)=\frac{(1-\rho)\rho}{1-\rho\|\mathbf G\|_2}$ increases sharply as $\rho\|\mathbf G\|_2\uparrow 1$, which makes the certified bound large even if the measured $\xi$ remains moderate. \emph{Mitigation:} enforce a safety margin such as $1-\rho\|\mathbf G\|_2\gtrsim 0.1$ in design or simulation; report this margin alongside results.

\paragraph{Modelling choices that drive the $1/\sqrt{\mu}$ law.}
Two choices are essential: (i) the Laplacian penalty appears \emph{outside} $\mathbf R_{\mathrm{in}}$ in the design objective, and (ii) we normalise by $\mathbf w^H\tilde{\mathbf R}_s\mathbf w=1$. Absorbing $\mu\mathbf L$ into $\mathbf R_{\mathrm{in}}$ alters the $\mu$-dependence and can destroy the $1/\sqrt{\mu}$ scaling; switching to $\|\mathbf w\|_2=1$ removes the factor $\sqrt{\lambda_{\max}(\tilde{\mathbf R}_s)}$ but replaces $\Lambda_{\mathrm{ref}}$ with $(\one^\top\mathbf R_{\mathrm{in}}\one)/N$. \emph{Mitigation:} choose the normalisation that matches your pipeline and apply the corresponding corollary; avoid folding $\mu\mathbf L$ into $\mathbf R_{\mathrm{in}}$ if you wish to retain the scaling law.

\paragraph{Sources of conservatism.}
The derivation uses triangle and Cauchy--Schwarz inequalities, replaces vector-dependent quantities by spectral radii, and bounds $\|\mathbf p_0\|_2=\||\mathbf w^\star|^2\|_2$ via $\|\mathbf w^\star\|_4^2\ge \|\mathbf w^\star\|_2^2/\sqrt N$, which introduces an explicit $N$ factor. Consequently the bound is generally not tight; it approaches tightness only for special alignments. In other words, tightness depends on instance structure and can be strongest under favourable alignments (e.g., $\mathbf p_0$ aligned with the Perron eigenvector of $\mathbf G$). \emph{Mitigation:} where available, impose mild dynamic-range controls on $|\mathbf w^\star|$ (e.g., $\|\mathbf w^\star\|_\infty/\|\mathbf w^\star\|_2$) to sharpen Step~3; report both the measured $\xi$ and the certificate to contextualise the safety margin.

\paragraph{Irreducible baseline from degrees.}
The additive term $d_{\max}$ is structural: if $C(\rho,\mathbf G)\,d_{\max}\ge \xi_{\mathrm{target}}$, no finite $\mu$ can satisfy the target under this model. \emph{Mitigation:} reduce $\rho$, sparsify or reweight $\mathbf G$ to lower $d_{\max}$ and $\|\mathbf G\|_2$, or redesign the physical coupling that induces $\mathbf G$.

\paragraph{Scope of the dynamic model.}
We analyse a deterministic, time-invariant linear recursion with no process noise or nonlinearities. Time-varying graphs ($\mathbf G_t$), stochastic disturbances, saturation/clipping, or multiplicative effects are outside scope. \emph{Mitigation:} for slowly varying $\mathbf G_t$, apply the bound to a worst-case surrogate (e.g., $\sup_t\|\mathbf G_t\|_2$); for noisy settings, one can extend the recursion to include additive noise and bound second moments with similar techniques. Our diffusion recursion should thus be viewed as a linear surrogate for mixing or message-passing dynamics, intended to provide a transferable robustness certificate across arrays, wireless networks, and graph-signal or GNN architectures, rather than as an exact physical propagation law for any single system.

\paragraph{Instance dependence and graph-family sensitivity.}
The constants $C(\rho,\mathbf G)$, $d_{\max}$, $\lambda_{\max}(\mathbf L)$, $\Lambda_{\mathrm{ref}}$, and $\lambda_{\max}(\tilde{\mathbf R}_s)$ are instance-specific; they control practical tightness and vary across graph families. \emph{Mitigation:} report these constants for each experiment and include a short sensitivity sweep (regular, random geometric, scale-free) to show how the certificate behaves beyond a single instance. In this paper, we focus on a single representative synthetic instance to illustrate these dependencies, and we view a broad empirical comparison across multiple random graph families (e.g., Poisson and power-law degree distributions) as an important direction for future work rather than part of the present contribution.

\paragraph{Numerical considerations.}
While the theory is spectral, numerical eigensolvers can be sensitive to asymmetry and conditioning. \emph{Mitigation:} symmetrise matrices in computations, add a tiny SPD stabiliser to $\tilde{\mathbf R}_s$ when solving the generalised eigenproblem, and verify $\rho\|\mathbf G\|_2<1$ with a reproducible margin before forming $(\Id-\rho\mathbf G)^{-1}$.

\section{Conclusion}
Under Laplacian regularisation, we established a provable bound on spreading in graph-structured systems. Starting from the diffusion recursion $\mathbf{p}_{t+1}=\rho\mathbf{G}\mathbf{p}_t+(1-\rho)\mathbf{p}_0$, our analysis yields
\[
\xi \;\le\; C(\rho,\mathbf G)\,\Big(\, d_{\max} \;+\; 
\sqrt{4\,N\,\lambda_{\max}(\mathbf L)\,\Lambda_{\mathrm{ref}}\,\lambda_{\max}(\tilde{\mathbf R}_s)/\mu}\,\Big),
\]
which cleanly separates a degree-driven baseline from a design-controlled term that decays as $1/\sqrt{\mu}$. This gives a one-shot rule for choosing $\mu$ and clarifies when tighter targets are infeasible (when $\xi_{\mathrm{target}} \le C(\rho,\mathbf G)d_{\max}$).

Beyond the bound itself, the results identify two actionable levers: the propagation factor $\rho$ and the graph spectrum through $d_{\max}$ and $\|\mathbf G\|_2$. If the feasibility condition fails, increasing $\mu$ cannot help; instead one must reduce $\rho$ or modify the topology/weights that define $\mathbf G$.

The proof template—steady-state expansion, Laplacian energy control, and a KKT reference-vector argument—extends naturally to time-varying or learned graphs and to alternative smoothness penalties. An important next step is sharpening constants (e.g., via distributional assumptions or dynamic-range controls) while preserving the fundamental $1/\sqrt{\mu}$ law.

\paragraph{Outlook.}
The bound is a practical certification tool for any workflow that (i) engineers an initial graph-smooth profile and (ii) then lets the network dynamics run. Representative use cases include:
\begin{itemize}
\item \textit{Wireless/arrays and RIS/IRS.} Limiting pattern leakage or energy bleed across elements after deployment.
\item \textit{Cooperative sensing and distributed control.} Constraining how quickly local actions spread in consensus/averaging loops.
\item \textit{Graph ML.} Setting a safe regularisation scale to control over-smoothing in GNNs or label-propagation schemes.
\item \textit{Infrastructure networks.} Managing diffusion-like processes such as load balancing in microgrids or congestion propagation in transportation graphs and associated networked systems.
\end{itemize}
Operationally, the corollary turns a target budget $\xi_{\mathrm{target}}$ into a closed-form $\mu$—a drop-in replacement for sweep-based tuning—and provides a quick infeasibility test when network coupling is too strong. Future directions include directed/signed graphs (non-normal $\mathbf G$), stochastic disturbances, learned or time-varying topologies, and application-specific constants for canonical graph families.

\paragraph{}
\emph{Data Availability Statement:} The data supporting the findings of this study are available from the corresponding author upon reasonable request.

\emph{Conflict of Interest Statement:} No conflict of interest has been declared by the author.

\emph{Funding Information:} None.

\emph{Author Contribution:}
A. R.: Conceptualisation; data curation; formal analysis; investigation; methodology; software; validation; visualisation; writing original draft; writing review and editing.

\end{document}